\begin{document}

\title[Fractional Moment Methods for Anderson Localization with SAW Representation]{Fractional Moment Methods for Anderson Localization with SAW Representation}

\author{Fumika Suzuki}

\address{Department of Physics \& Astronomy, The University of British Columbia, Vancouver, BC V6T 1Z1, Canada}
\ead{fumika@physics.ubc.ca}
\begin{abstract}
The Green's function contains much information about physical systems. Mathematically, the fractional moment method (FMM)  developed by Aizenman and Molchanov connects the Green's function and the transport of electrons in the Anderson model. Recently, it has been discovered that the Green's function on a graph can be represented using self-avoiding walks on a graph, which allows us to connect localization properties in the system and graph properties. We discuss FMM in terms of the self-avoiding walks on a general graph, the only general condition being that the graph has a uniform bound on the vertex degree. 

\end{abstract}

\maketitle
\newtheorem{theorem}{Theorem}[section]
\newtheorem{lemma}[theorem]{Lemma}
\newtheorem{proposition}[theorem]{Proposition}
\newtheorem{corollary}[theorem]{Corollary}
\newtheorem{definition}[theorem]{Definition}
\newtheorem{example}[theorem]{Example}

\section{Introduction}

Rigorous studies of Anderson localization in the mathematical context
started in the 1970s. So far, there exist several methods invented to prove
Anderson localization and two methods provide proofs of Anderson localization
in arbitrary dimension, not only one dimension. They are multiscale
analysis (MSA) by Fr\"{o}hlich and Spencer \cite{Fr} (for the survey of MSA, refer to \cite{Kl}) and the fractional moment
method (FMM) by Aizenman and Molchanov \cite{Aizen1,Aizen}. Although
MSA can handle more situations of the Anderson model than FMM can,
FMM is a simpler method and gives us stronger results on dynamical localization. In this paper, we deal with Anderson localization using FMM. 

This paper is very closely related to the papers \cite{Hun,Tau} which have used the self-avoiding walk (SAW) representation of Green's function in proving localization properties. \cite{Hun} studied the Anderson model on $\mathbb{Z}^{n}$. Here, we consider more general graphs, the only general condition being that the graph has a uniform bound on the vertex degree. \cite{Tau} does not assume that the graph has uniformly bounded vertex degree and so a larger class of graphs have been studied. As a result, \cite{Tau} put much of its focus on issues related to unboundedness of the Hamiltonian. This only allows for a weaker form of dynamical localization than our definition (1) in the section 2. 

Although Aizenman and Molchanov noted already in the pioneer papers \cite{Aizen1,Aizen} that the FMM applies also to uniformly bounded graphs and gives localization in the large disorder regime, the following are the important objectives of this paper.\\

(i) It seems that many studies of Anderson localization in the mathematics community and in the physics community have been developed independently without communicating each other. While the connection between self-avoiding walks and Anderson localization has been observed before and is known to specialists in the field of mathematical physics, there is still need for making it more broadly known to researchers in other fields, in particular, theoretical physicists. For those readers, this paper provides a proof of localization which is transparent and intuitive and uses a minimal amount of mathematical technicalities.

(ii) The main result, Theorem 5.1 provides a better understanding of how localization properties depend not only on the amount of disorder in the random potential, but also on graph properties such as the number of self-avoiding walks of given length and the volume growth of the graph. It is written in the simple form so that it can be used for future investigations not only in mathematical physics, but also in other fields such as theoretical physics.

\section{The Anderson Model}
In this section, we introduce the Anderson model which describes the motion of an electron in a disordered system. There exist many good reviews of the Anderson model such as \cite{Hun, Sto} and we follow their notations in this paper. In physics, it is common to model the system by a lattice $\mathbb{Z}^{n}$ or a graph $\mathbb{G}$. In this paper, we deal with the Anderson model on a graph $\mathbb{G}$ with the following assumptions.\\

Let $\mathbb{G} = (V, E)$ be a graph with vertices $V$ and edges $E$. We assume $\mathbb{G}$ is a connected graph where the number of edges between any pair of vertices is either one or zero. We write $v\sim w$ if an edge connects the vertices $v$ and $w$. Let $N(v)$ be the degree of a vertex $v\in V$. i.e., $N(v) := \# \{w \in V : w\sim v\}$. We assume that the degree of a vertex is bounded above by some constant $N$, $N(v) \leq N < \infty$ for all $v \in V$. $d(v,w)$ is the graph distance from $v$ to $w$, which is the minimum number of edges from $v$ to $w$ on $\mathbb{G}$. \\

$\mathcal{W} (v,w)$ is the set of self-avoiding walks (sequences of vertices) $[v , v_1, \ldots , v_{d(v,w)}]$ with $d(v,w)$ steps starting at $v_0 =v$. The walks need not end at $w$.

 Furthermore, $\mathcal{W}' (v,w) $ is the set of self-avoiding walks $[v,v_1,\ldots,v_{d(v,w)}]$ with $d(v,w)$ steps starting at $v_0 =v$ and with $v_{d(v,w)}$ connected to $w$ in the graph obtained by deleting all edges attached to $[v, v_1 ,\ldots , v_{d(v,w)-1}]$.

We define a function $\mathscr{W} (d)$ which measures the maximum number of self-avoiding walks $\mathcal{W}' (v,w)$ with $d$ steps that can happen in $\mathbb{G}$.

$\mathscr{W} (d ) = \max \{  |\mathcal{W}' (v,w)| : d(v,w)=d \}$, thus $|\mathcal{W}' (v,w)| \leq \mathscr{W} (d(v,w))$.\\

We write the set of vertices on a sphere (or shell) with radius $d$ from some origin $v$ as $S(v,d)$. We write the set of vertices in a ball with radius $d$ from some origin $v$ as $B(v,d)$. $|S(v,d)|$ and $|B(v,d)|$ are the number of vertices in $S(v,d)$ and $B(v,d)$ respectively. Then, we define $\mathscr{S}(d)$ to be the largest possible value of $|S(v,d)|$ as $v$ ranges over the graph, $\mathscr{S}(d) = \displaystyle\max_{v} |S(v,d)|$. $\mathscr{S}(d)$ is bounded by the biggest possible value $N(N - 1)^d$.\\

Disordered matter can be described by a \emph{random Schr\"{o}dinger operator} acting on the Hilbert space $l^2 (V)$:
\begin{center}
$l^2 (V) = \{\psi : V \rightarrow \mathbb{C} : \displaystyle\sum_{v \in V} |\psi (v)|^2 < \infty \}$
\end{center}
with inner product $\langle \psi , \phi \rangle = \displaystyle\sum_{v\in V} \bar{\psi} (v) \phi (v)$. 

A random Schr\"{o}dinger operator can be written as
\begin{center}
$H=H_{\omega} = T + \lambda \omega_{v}$
\end{center}
where $T$ is the kinetic energy, the random potential $\omega_{v}$ is a multiplication operator on $l^2 (V)$ with a coupling constant $\lambda >0$. We assume the simplest case where $(\omega_{v})_{v\in V}$ is a set of independent, identically distributed (i.i.d.) real-valued random variables. 

Large coupling constant $\lambda >>1$ indicates large disorder (randomness) and small coupling constant $\lambda <<1$ indicates small disorder. As $\lambda$ increases, the distribution is spread out over larger supports and the random potential can take a wider range of possible random values.\\

We assume that the distribution $\mu$ of $\omega_{v}$ is absolutely continuous with density $\rho$ where $\rho$ is bounded with compact support, i.e.,

\begin{center}
$\mu (B) = \int_{B} \rho  (u) du$ for $B \subset \mathbb{R}$ Borel, \quad $\rho \in L^{\infty}_0 (\mathbb{R})$
\end{center}

Physically, $\omega_{v}$ represents the random electric potential created by nuclei at the sites $v \in V$. $T$ describes the kinetic energy and it is often called next neighbour hopping operator acting on $\psi \in l^2 (V)$. Also, $T$ is the negative adjacency matrix of $\mathbb{G}$.

\begin{center}
$T\psi (v) = - \displaystyle\sum_{w: w\sim v} \psi (w)$
\end{center}
so that

\begin{center}
$(H \psi) (v) = (T \psi) (v) + \lambda \omega_{v} \psi (v) , \quad v \in V$
\end{center}

If we use the Dirac notation, we can write

\begin{center}
$H= -\displaystyle\sum_{\{v,w\}:v\sim w} (|v\rangle\langle w| +|w\rangle\langle v|) + \lambda \displaystyle\sum_{v\in V} \omega_{v} |v\rangle\langle v|$
\end{center}
where $|v\rangle = \delta_{v}$ with $\delta_{v}$ the Kronecker delta function. i.e., $\delta_{v} (v)=1$ and $\delta_{v} (w)=0$ for $v\not= w$. $\{ \delta_{v} \}_{v \in V}$ is the canonical orthonormal basis for $l^2 (V)$. We can write the projection operator as $|v\rangle\langle v| =\langle \delta_{v} , \cdot \rangle \delta_{v}$, where $\langle \cdot , \cdot \rangle$ is the usual scalar product in $l^2 (V)$. For a bounded operator $M$ on $l^2 (V)$, we can write the $(v,w)$-entry of the matrix as $M(v,w) = \langle v|M|w\rangle$.

$T$ is symmetric and bounded since there is an uniform bound $N <\infty$ on the vertex degree. Thus $T$ is self-adjoint. The random potential term $\lambda \omega_{v} : l^2 (V) \rightarrow l^2 (V)$ is symmetric and bounded with the assumption that $\rho$ has compact support. Therefore $H$ is also bounded and self-adjoint.

The quantum mechanical motion of an electron in a disordered system can be described by the above random Schr\"{o}dinger operator $H$ and this model is called Anderson model. 

Anderson localization caused by the absence of electron transport follows from \emph{dynamical localization}.

\begin{definition}
\textbf{\emph{(Dynamical localization)}} $H$ exhibits dynamical localization in $I$ if there exist constants $C <\infty$ and $\mu >0$ such that
\begin{equation}
\displaystyle\sum_{y\in S(x,d)} \mathbb{E} \left( \displaystyle\sup_{t\in\mathbb{R}} |\langle \delta_{y} , e^{-itH} \chi_{I} (H) \delta_{x} \rangle | \right) \leq Ce^{-\mu d}
\end{equation}
 for all $x \in V$.
\end{definition}
$\mathbb{E}$ is the expectation with respect to the probability measure for random variables $\lambda \omega_{v}$. $\chi_{I}$ is the characteristic function of $I$ and so $\chi_{I} (H)$ is the orthogonal projection onto the spectral subspace of $H$ corresponding to energies in $I$. i.e., we only deal with the initial states with energy in $I$.

This is a stronger definition than the standard definition of dynamical localization in the lattice case which requires the expectation for any $x$ and $y$ (with no sum over $y$) to decay exponentially in the distance $d(x,y)$. In this definition, the sum of the expectation over all $y \in S(x,d)$ should decay exponentially with distance $d$. For the lattice, definitions are equivalent since $|S(x,d)|$ grows polynomially.

Dynamical localization gives us physical intuition. It implies that the wavefunctions which are the solutions of the time-dependent Schr\"{o}dinger equation are uniformly localized in space for all times. This leads to the localization of an electron, therefore the absence of electron transport. Furthermore, dynamical localization implies spectral localization \cite{Sto, Su}.

\section{SAW Representation for the Green's Function}
Although the fact that Green's function on a graph can be represented using self-avoiding walks on a graph has been already discussed in \cite{Hun, Tau}, in this section, we derive SAW representation for the Green's function in the way which gives us physical intuition.  

\begin{figure}[htbp]
 \begin{center}
  \includegraphics[width=65mm]{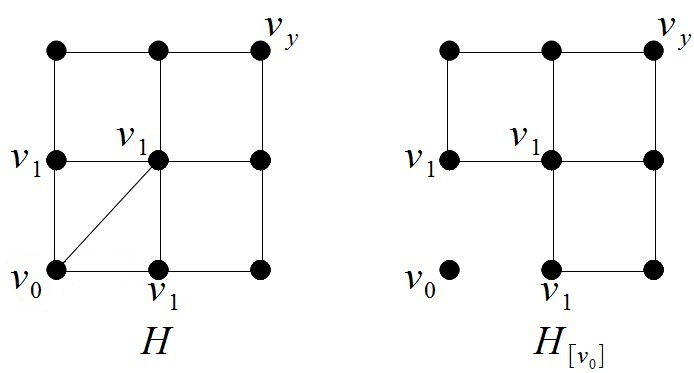}
 \end{center}
 \caption{$H$ and $H_{[v_0]}$}
 \label{fig:one}
\end{figure}

The Green's function $G(x,y;z)$ is the matrix element of the resolvent of $H$, which is written as
\begin{center}
$G(x,y; z) :=\langle x| (H-z)^{-1} |y \rangle$
\end{center}
where $z \in \mathbb{C} \setminus \mathbb{R}$ represents imaginary energy $z= E+i\epsilon$.

We define a depleted random Schr\"{o}dinger operator which can be made from $H$ by self-avoiding walk process as follows (Figure 1).\\

 $H_{[v_0]} = H+ \displaystyle\sum_{v_1 \sim [v_0]} (|v_0\rangle\langle v_1 | + |v_1\rangle\langle v_0|)$ \\
(i.e., $H_{[v_0]}$ is $H$ without edges connected to $v_0$.)\\

$H_{[v_0, \ldots v_{i}]} = H_{[v_0, \ldots v_{i-1}]} + \displaystyle\sum_{v_{i} \sim [v_0, \ldots v_{i-1}]} (|v_{i-1}\rangle\langle v_{i}| + |v_{i}\rangle\langle v_{i-1}|)$.\\
where $\displaystyle\sum_{v_{i} \sim [v_0, \ldots v_{i-1}]}$ is summing over every possible vertex $v_{i}$ which can be reached by taking the next step after the self-avoiding walk $[v_0, \ldots , v_{i-1}]$.\\

Then we obtain the following proposition which is essentially Lemma 4.3 of \cite{Hun}, however we provide a simpler proof here by avoiding convergence issues due to an infinite volume random walk representation used in \cite{Hun}.\\

\begin{proposition}\textbf{\emph{(SAW representation)}} Let $x=v_0,y \in V$. Then the Green's function can be written as\\\\
$G (x, y ;z) = \displaystyle\sum_{[v_0, \ldots , v_{d(x,y)}]} \displaystyle\prod_{i=0}^{d(x,y)-1} \langle v_{i} |(H_{[v_0, \ldots , v_{i-1}]} -z)^{-1} |v_{i} \rangle \langle v_{d(x,y)} | (H_{[v_0, \ldots, v_{d(x,y)-1}]}-z)^{-1} | y\rangle$\\
where $\displaystyle\sum_{[v_0,\ldots,v_{d(x,y)}]}$ is summing over all self-avoiding walks starting at $x$ with length $d(x,y)$. When $i=0$, $H_{[v_0,v_{-1}]} =H$. When $i=1$, $H_{[v_0,v_0]} =H_{[v_0]}$.
\end{proposition}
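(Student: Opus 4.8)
The plan is to prove the SAW representation by iterating a single elementary ``one-step'' identity, obtained from block (Schur-complement) inversion of the resolvent, exactly $d(x,y)$ times. This keeps every sum finite and sidesteps the infinite-volume convergence questions that arise in the random-walk approach of \cite{Hun}.

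First I would establish the one-step identity. Fix any of the (self-adjoint, bounded) depleted operators $\mathcal{H}$ arising in the construction together with a vertex $a$ that is not isolated in the corresponding graph. Splitting $l^2(V) = \mathbb{C}\,|a\rangle \oplus \mathcal{H}'$ with $\mathcal{H}' = \mathrm{span}\{|w\rangle : w\neq a\}$, I write $\mathcal{H}-z$ in $2\times 2$ block form. The crucial observation is that deleting the edges at $a$ alters only the off-diagonal blocks, so the lower-right block is exactly $(\mathcal{H}_{[a]}-z)$ restricted to $\mathcal{H}'$; since $z\notin\mathbb{R}$ and $\mathcal{H}_{[a]}$ is self-adjoint this block is invertible and (because $a$ is isolated in $\mathcal{H}_{[a]}$) its inverse is the $\mathcal{H}'$-block of the depleted resolvent. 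Applying the block-inverse formula and using that the only nonzero off-diagonal entries $\langle w|\mathcal{H}|a\rangle$ equal $-1$ for $w\sim a$, so that the two minus signs cancel, I obtain for $b\neq a$ the one-step identity
\begin{equation}
\langle a|(\mathcal{H}-z)^{-1}|b\rangle = \langle a|(\mathcal{H}-z)^{-1}|a\rangle \sum_{w\sim a}\langle w|(\mathcal{H}_{[a]}-z)^{-1}|b\rangle ,
\end{equation}
where the sum runs over neighbours of $a$ in the graph of $\mathcal{H}$. Note that the surviving diagonal factor is the \emph{full} (undepleted) diagonal $\langle a|(\mathcal{H}-z)^{-1}|a\rangle$, matching the form of the stated product; I would check the $d(x,y)=0,1$ cases directly as a sanity test.

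Then I would iterate. Taking $\mathcal{H}=H_{[v_0,\dots,v_{k-1}]}$ and $a=v_k$, the neighbours of $v_k$ in this graph are precisely the vertices $v_{k+1}\sim v_k$ with $v_{k+1}\notin\{v_0,\dots,v_{k-1}\}$, since every previously visited vertex has been isolated; thus the self-avoidance constraint is produced automatically at each application. Starting from $G(x,y;z)=\langle v_0|(H-z)^{-1}|y\rangle$ and applying the one-step identity repeatedly peels off one diagonal factor $\langle v_i|(H_{[v_0,\dots,v_{i-1}]}-z)^{-1}|v_i\rangle$ and advances the head of a growing self-avoiding walk. The identity is licensed at step $k$ precisely when $y\neq v_k$, which I would guarantee through the distance bound $d(v_k,y)\ge d(x,y)-k$: for every self-avoiding walk and every $k<d(x,y)$ this is at least $1$, so $v_k\neq y$ and the one-step identity applies at each of the first $d(x,y)$ stages. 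Truncating after $d(x,y)$ applications then yields an exact identity, and collecting the accumulated factors over all length-$d(x,y)$ self-avoiding walks gives the stated product, the last factor being $\langle v_{d(x,y)}|(H_{[v_0,\dots,v_{d(x,y)-1}]}-z)^{-1}|y\rangle$. Walks whose endpoint cannot reach $y$ in the fully depleted graph contribute $0$, which is consistent with the role of $\mathcal{W}'(x,y)$.

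The main obstacle I anticipate is bookkeeping rather than analysis: matching the depleted operators produced by successive Schur reductions to the recursively defined $H_{[v_0,\dots,v_i]}$, and verifying that iterating $d(x,y)$ times is exactly right, neither terminating early (ruled out by $v_k\neq y$ for $k<d(x,y)$) nor overcounting (ruled out by the emergent self-avoidance). Because each vertex has at most $N$ neighbours and there are only finitely many self-avoiding walks of a given length, every sum in the argument is finite and the manipulations are purely algebraic, so no convergence estimate is required.
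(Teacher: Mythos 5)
Your proposal is correct and follows essentially the same route as the paper: both proofs iterate the same one-step identity $\langle v_k|(\mathcal{H}-z)^{-1}|y\rangle = \langle v_k|(\mathcal{H}-z)^{-1}|v_k\rangle\sum_{w\sim v_k}\langle w|(\mathcal{H}_{[v_k]}-z)^{-1}|y\rangle$ exactly $d(x,y)$ times, with self-avoidance emerging automatically because each depleted operator isolates the previously visited vertices. The only difference is that you derive this identity by Schur/block inversion while the paper uses the second resolvent identity $A^{-1}-B^{-1}=A^{-1}(B-A)B^{-1}$ together with the vanishing of $\langle v_k|(\mathcal{H}_{[v_k]}-z)^{-1}|y\rangle$; these are equivalent algebraic routes to the same lemma.
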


\begin{proof}

Firstly, $H=H_{[v_0]} -\displaystyle\sum_{v_1 \sim [v_0]} (|v_0\rangle\langle v_1 | + |v_1\rangle\langle v_0 |)$

Using the resolvent formula $A^{-1} - B^{-1} = A^{-1} (B-A) B^{-1}$ with $A= H-z$ and $B= H_{[v_0]} -z$, we have

\begin{center}
$(H-z)^{-1} = (H-z)^{-1} \displaystyle\sum_{v_1 \sim [v_0]} (|v_0\rangle\langle v_1| + |v_1\rangle\langle v_0|)(H_{[v_0]}-z)^{-1}+(H_{[v_0]}-z)^{-1}$
\end{center}

Then, if $d(x,y) \geq 1$ so that $y\not=v_0$,

\begin{center}
$\langle v_0 | (H-z)^{-1} |y\rangle = \langle v_0 | (H-z)^{-1}| v_0\rangle \displaystyle\sum_{v_1 \sim [v_0]} \langle v_1 | (H_{[v_0]} -z)^{-1} |y\rangle$
\end{center}
since $\langle v_0 | (H_{[v_0]} -z)^{-1} |y\rangle =0$ as edges connected to $v_0$ are removed in $H_{[v_0]}$ ($H_{[v_0]}$ is block-diagonal).

Similarly,

\begin{center}
$H_{[v_0]} = H_{[v_0 ,v_1]} -\displaystyle\sum_{v_2 \sim [v_0,v_1]} (|v_1\rangle\langle v_2 | + |v_2\rangle\langle v_1 |)$
\end{center}

Then, if $d(x,y) \geq 2$ so that $y\not= v_1$, we have
\begin{center}
$\langle v_1 | (H_{[v_0]} -z)^{-1} |y\rangle =\langle v_1 | (H_{[v_0]} -z)^{-1} |v_1\rangle \displaystyle\sum_{v_2 \sim [v_0,v_1]} \langle v_2 | (H_{[v_0,v_1]} -z)^{-1} |y\rangle$
\end{center}

Therefore

$\langle v_0 | (H-z)^{-1} |y\rangle = \langle v_0 | (H-z)^{-1}| v_0\rangle \displaystyle\sum_{v_1 \sim [v_0]} \langle v_1 | (H_{[v_0]} -z)^{-1} |v_1\rangle$

\qquad \qquad \qquad \qquad\qquad\qquad\qquad $\times \displaystyle\sum_{v_2 \sim [v_0,v_1]} \langle v_2 | (H_{[v_0,v_1]} -z)^{-1} |y\rangle $

Repeating the above process we have the form:

$G (x, y ;z) = \langle v_0 | (H-z)^{-1} |y\rangle $

$ =  \displaystyle\sum_{v_1 \sim [v_0]}\displaystyle\sum_{v_2 \sim [v_0,v_1]}\cdots  \displaystyle\sum_{v_{d(x,y)} \sim [v_0, \ldots , v_{d(x,y)-1}]} \displaystyle\prod_{i=0}^{d(x,y)-1} \langle v_{i} |(H_{[v_0, \ldots , v_{i-1}]} -z)^{-1} |v_{i} \rangle$

\qquad \qquad \qquad \qquad\quad\qquad \qquad \qquad$\times \langle v_{d(x,y)} | (H_{[v_0, \ldots, v_{d(x,y)-1}]}-z)^{-1} | y\rangle$

$= \displaystyle\sum_{[v_0, \ldots , v_{d(x,y)}]} \displaystyle\prod_{i=0}^{d(x,y)-1} \langle v_{i} |(H_{[v_0, \ldots , v_{i-1}]} -z)^{-1} |v_{i} \rangle \langle v_{d(x,y)} | (H_{[v_0, \ldots, v_{d(x,y)-1}]}-z)^{-1} | y\rangle$\\
since $\displaystyle\sum_{[v_0, \ldots , v_{d(x,y)}]} =  \displaystyle\sum_{v_1 \sim [v_0]}\displaystyle\sum_{v_2 \sim [v_0,v_1]}\cdots  \displaystyle\sum_{v_{d(x,y)} \sim [v_0, \ldots , v_{d(x,y)-1}]}$

\end{proof}
Here, self-avoiding walks are sequences of vertices $[v_0 ,\ldots , v_{d(x,y)}]$. Note that if a walker can not find any edge to walk from $v_{i}$ since he deleted all edges connected to $v_{i}$, then the contribution of that walk to the Green's function is $0$ since  $\langle v_{i} |(H_{[v_0,\ldots , v_{i-1}]} -z)^{-1}|y\rangle =0$ (See $\overline{\mathcal{X}} (x,y)$ in Figure 2).

There may be a small analogy between SAW representation for the Green's function and path integral approach to propagator in quantum mechanics, although the Green's function here is not a propagator.

\section{Fractional Moment Bounds with SAW Representation}

Now, we write the fractional moment bounds of the Green's function in terms of self-avoiding walks, which will be used in the next section. 

Firstly, let $\mathcal{W} (x,y)$ be the set of self-avoiding walks $[x , v_1, \ldots , v_{d(x,y)}]$ with $d(x,y)$ steps starting at $v_0 =x$. Then, we can divide $\mathcal{W} (x,y)$ into three subsets (Figure 2):\\

$\mathcal{Y} (x,y)$ : Self-avoiding walks in $\mathcal{W} (x,y)$ with $v_{d(x,y)} =y$.

$\mathcal{X} (x,y) $ : Self-avoiding walks in $\mathcal{W} (x,y)$ with $v_{d(x,y)} \not=y$ where $v_{d(x,y)}$ is connected to $y$  in the graph obtained by deleting all edges attached to $[x,v_1,\ldots,v_{d(x,y)-1}]$.

$\overline{\mathcal{X}} (x,y) $ : Self-avoiding walks in $\mathcal{W} (x,y)$ with $v_{d(x,y)} \not=y$ where $v_{d(x,y)}$ is not connected to $y$  in the graph obtained by deleting all edges attached to $[x,v_1,\ldots,v_{d(x,y)-1}]$.\\

Only $\mathcal{Y} (x,y)$ and $\mathcal{X} (x,y)$ contribute to the Green's function since
\begin{center}
$\langle v_{d(x,y)} | (H_{[v_0, \ldots, v_{d(x,y)-1}]}-z)^{-1} | y\rangle = 0$\quad if \quad $[x,v_1 , \ldots , v_{d(x,y)}] \in \overline{\mathcal{X}} (x,y).$
\end{center}

We can define $\mathcal{W}' (x,y) = \mathcal{Y} (x,y) \cup \mathcal{X} (x,y)$ be the set of self-avoiding walks $[x,v_1,\ldots,v_{d(x,y)}]$ with $d(x,y)$ steps starting at $v_0 =x$ and $v_{d(x,y)}$ is connected to $y$ in the graph obtained by deleting all edges attached to $[x, v_1 ,\ldots , v_{d(x,y)-1}]$. Note that $y$ is connected to itself in the case $v_{d(x,y)}=y$.

The other fact that we use is an a priori bound:\\

\begin{lemma}\textbf{\emph{(A priori bound)}} Let $0<s<1$. There exist constants $C_1 (s,\rho), C_2 (s, \rho) <\infty$ such that\\

$\mathbb{E}_{x}  (|G (x,x;z)|^{s} )  \leq \|\rho\|^{s}_{\infty} \frac{2^{s} s^{-s}}{1-s} \lambda^{-s} = C_1 (s,\rho) \lambda^{-s}$\\

$\mathbb{E}_{x,y}  (|G(x,y;z)|^{s} )  \leq \|\rho \|^{s}_{\infty} 2^{s+1} \frac{2^{s} s^{-s}}{1-s} \lambda^{-s} = C_2 (s, \rho) \lambda^{-s}$\\ \\
for all $x, y \in V$ where $x \not= y$, $z\in \mathbb{C} \setminus \mathbb{R}$ and $\lambda >0$.\\
Here
\begin{center}
$\mathbb{E}_{x} (\ldots) = \int \ldots \rho (\omega_{x}) d\omega_{x}$
\end{center}
and
\begin{center}
$\mathbb{E}_{x,y} (\ldots) = \int \int \cdots \rho (\omega_{x}) d\omega_{x} \rho (\omega_{y}) d\omega_{y}$
\end{center}
is the conditional expectation with $(\omega_{u})_{u \in V \setminus \{x,y\}}$ fixed. After averaging over $\omega_{x}$ and $\omega_{y}$, the bound does not depend on the remaining random potentials \cite{Sto}.
\end{lemma}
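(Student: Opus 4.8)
The plan is to reduce both inequalities, by \emph{spectral averaging} over a single potential variable, to a one-dimensional integral estimate, and then to prove that scalar estimate explicitly with the optimal constant. First I would isolate the dependence of $G(x,x;z)$ on $\omega_x$ alone. Writing $H = H_{\hat x} + \lambda\omega_x |x\rangle\langle x|$, where $H_{\hat x}$ carries no potential at $x$ and is therefore independent of $\omega_x$, the rank-one resolvent identity (the same manipulation as in the proof of the SAW representation) gives $G(x,x;z) = (\lambda\omega_x + \Gamma)^{-1}$ with $\Gamma=\Gamma(z)=1/\langle x|(H_{\hat x}-z)^{-1}|x\rangle$ independent of $\omega_x$. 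Hence $\mathbb E_x(|G(x,x;z)|^s) = \int |\lambda u + \Gamma|^{-s}\rho(u)\,du = \lambda^{-s}\int |u-a|^{-s}\rho(u)\,du$ with $a=-\Gamma/\lambda\in\mathbb C$, so the whole problem collapses onto one scalar integral.

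The core technical step is the bound on $\int |u-a|^{-s}\rho(u)\,du$, uniform in $a\in\mathbb C$. I would split the $u$-integral at radius $r$ about $\mathrm{Re}\,a$: on $|u-\mathrm{Re}\,a|\ge r$ the integrand is at most $r^{-s}$ and the total mass $\int\rho=1$ gives the contribution $r^{-s}$; on $|u-\mathrm{Re}\,a|<r$ I use $|u-a|\ge|u-\mathrm{Re}\,a|$ together with $\rho\le\|\rho\|_\infty$ to get $\frac{2\|\rho\|_\infty}{1-s}r^{1-s}$. Minimizing $r^{-s}+\frac{2\|\rho\|_\infty}{1-s}r^{1-s}$ at $r_*=s/(2\|\rho\|_\infty)$ yields exactly $\|\rho\|_\infty^s\frac{2^s s^{-s}}{1-s}$, which is $C_1(s,\rho)$, with the factor $\lambda^{-s}$ already separated. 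This proves the first inequality, and the integrability hypothesis $s<1$ is exactly what makes the near-singularity piece finite.

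For the off-diagonal bound I would isolate the joint dependence on $\omega_x$ and $\omega_y$. Using the rank-two Krein/Feshbach reduction with $P=|x\rangle\langle x|+|y\rangle\langle y|$, the compression of $(H-z)^{-1}$ to $\mathrm{span}\{|x\rangle,|y\rangle\}$ equals $(\lambda\,\mathrm{diag}(\omega_x,\omega_y)+M)^{-1}$, where the $2\times2$ self-energy $M=M(z)$ is independent of both $\omega_x$ and $\omega_y$ (the potentials at $x,y$ act only inside $P$, so they drop out of every block of $M$). Inverting the matrix gives $G(x,y;z) = -M_{12}\big/\big[(\lambda\omega_x+M_{11})(\lambda\omega_y+M_{22})-M_{12}M_{21}\big]$, and I would bound $\mathbb E_{x,y}(|G(x,y;z)|^s)$ by the resulting two-dimensional integral, uniformly in $M$.

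The hard part will be exactly this last step, and it is subtler than it looks. Integrating over $\omega_x$ first by the scalar estimate and then over $\omega_y$ produces a spurious $|M_{12}|^s\lambda^{-2s}$, and the effective coupling $M_{12}$ is \emph{not} uniformly bounded (it diverges as $\mathrm{Im}\,z\to0$, the regime of interest), so the iterated approach fails. Nor is there a pointwise bound of $|G(x,y)|$ by the diagonal entries: at a resonance $\lambda\omega\approx\mathrm{Re}\,z$ one has $|G(x,y)|\sim1$ while $|G(x,x)|,|G(y,y)|\sim\mathrm{Im}\,z$, so only the averaging saves the estimate. The resolution is to keep the double integral intact and rescale via $\alpha=\lambda\omega_x+M_{11}$, $\beta=\lambda\omega_y+M_{22}$; the integral is then governed by the dimensionless ratio $|M_{12}|/\lambda$, and its worst case is $|M_{12}|\sim\lambda$, where the singularity of $(\alpha\beta-M_{12}M_{21})^{-s}$ lies inside the integration box yet remains integrable because $s<1$. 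Performing the same near/far splitting as in the scalar case, now in the two variables, gives the uniform bound $C_2(s,\rho)\lambda^{-s}$ and accounts for the extra factor $2^{s+1}$ relative to the diagonal constant.
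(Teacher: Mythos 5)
You should first note that the paper does not actually prove this lemma: it states the bounds and defers the proof to \cite{Sto,Tau}, so there is no in-paper argument to compare against. Judged on its own terms, the first half of your proposal is complete and correct. The rank-one reduction $G(x,x;z)=(\lambda\omega_x+\Gamma)^{-1}$ with $\Gamma$ independent of $\omega_x$, followed by the near/far splitting of $\int|u-a|^{-s}\rho(u)\,du$ at $r_*=s/(2\|\rho\|_\infty)$, is the standard argument and reproduces the constant $\|\rho\|_\infty^s\,2^s s^{-s}/(1-s)$ exactly; the uniformity in $a\in\mathbb{C}$ (hence in the remaining potentials, via $\Gamma$) is also correctly secured by $|u-a|\ge|u-\mathrm{Re}\,a|$.

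The off-diagonal half, however, has a genuine gap precisely at the step you flag as ``the hard part.'' You correctly set up the rank-two Schur reduction, correctly observe that $M$ is independent of $\omega_x,\omega_y$, and correctly diagnose why iterated one-variable integration fails (it leaves an uncontrolled factor $|M_{12}|^s$). But the proposed cure --- ``the same near/far splitting, now in two variables'' --- is an assertion, not an argument. The singular set of $(\alpha\beta-M_{12}M_{21})^{-s}$ is a hyperbola whose position and curvature depend on the unbounded parameter $M_{12}$, and a naive ball/complement decomposition does not give a bound uniform in $M$: one must exploit that the determinant is affine in each of $\alpha,\beta$ separately (e.g.\ via a weak-$L^1$ estimate for $\|(A+\mathrm{diag}(v_1,v_2))^{-1}\|$, as in Graf and in \cite{Sto}, or via the Aizenman--Molchanov decoupling lemma, or Tautenhahn's explicit computation in \cite{Tau}), and one must separately handle the regime where the coefficient $\lambda v_2+M_{22}$ of the affine function of $v_1$ is itself near zero. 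None of this is sketched, and the specific constant $2^{s+1}C_1$ --- which is the actual content of the second inequality as stated --- is claimed to ``come out'' of the splitting without any derivation. As written, the proposal proves the first bound and reduces the second to an unproved two-dimensional integral estimate; to complete it you should either prove that estimate (tracking the factor $2^{s+1}$) or do as the paper does and cite it.
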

Note $C_1 (s,\rho) < C_2 (s,\rho)$. The proof of Lemma 4.1 is given in \cite{Sto, Tau}.\\

\begin{figure}[htbp]
 \begin{center}
  \includegraphics[width=90mm]{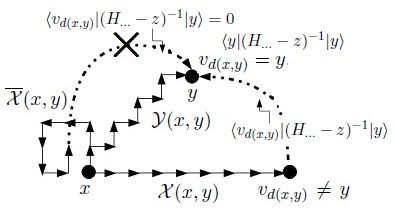}
 \end{center}
 \caption{The difference between three kinds of self-avoiding walks.}
 \label{fig:four}
\end{figure}
\newpage
\begin{theorem}\textbf{\emph{(Fractional moment bounds)}} Let us write the number of walks $\mathcal{Y} (x,y)$, $\mathcal{X} (x,y)$ and $\mathcal{W}' (x,y)$ as $|\mathcal{Y} (x,y)|$, $|\mathcal{X} (x,y)|$ and $|\mathcal{W}' (x,y)|$ respectively. Let $0 < s <1$. Then, the fractional moment bounds of the Green's function can be written as follows.

\begin{center} $\mathbb{E}(|G (x, y ;z)|^{s})\leq |\mathcal{W}' (x,y)| \left( \frac{C_2 (s,\rho)}{\lambda^{s}}\right)^{d(x,y)+1}$\end{center}
where $C_2 (s,\rho)=  \|\rho\|^{s}_{\infty} 2^{s+1}\frac{2^{s}s^{-s}}{1-s}$.\end{theorem}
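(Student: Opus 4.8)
The plan is to start from the SAW representation of Proposition 3.1, raise it to the power $s$, and exploit the elementary subadditivity inequality $\left|\sum_k a_k\right|^s \le \sum_k |a_k|^s$, valid for $0<s<1$, to move the $s$-th power inside the sum over self-avoiding walks. Since the factor $\langle v_{d(x,y)} | (H_{[v_0,\ldots,v_{d(x,y)-1}]}-z)^{-1} | y\rangle$ vanishes exactly on the walks in $\overline{\mathcal{X}}(x,y)$, only the $|\mathcal{W}'(x,y)|$ walks in $\mathcal{Y}(x,y)\cup\mathcal{X}(x,y)$ survive. For each surviving walk the summand then factorizes into
\begin{center}
$\displaystyle\prod_{i=0}^{d(x,y)-1}\left|\langle v_i|(H_{[v_0,\ldots,v_{i-1}]}-z)^{-1}|v_i\rangle\right|^s\cdot\left|\langle v_{d(x,y)}|(H_{[v_0,\ldots,v_{d(x,y)-1}]}-z)^{-1}|y\rangle\right|^s,$
\end{center}
a product of $s$-th powers of diagonal Green's functions of depleted operators together with one final factor that is diagonal when $v_{d(x,y)}=y$ and off-diagonal otherwise.

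The key structural observation, and the step I expect to require the most care, concerns how each factor depends on the random potentials. In the depleted operator $H_{[v_0,\ldots,v_{i-1}]}$ every edge incident to $v_0,\ldots,v_{i-1}$ has been removed, so these vertices are isolated and the operator is block-diagonal with $\{v_0\},\ldots,\{v_{i-1}\}$ as singleton blocks. Hence the diagonal element $\langle v_i|(H_{[v_0,\ldots,v_{i-1}]}-z)^{-1}|v_i\rangle$ is computed entirely within the block containing $v_i$ and is therefore independent of $\omega_{v_0},\ldots,\omega_{v_{i-1}}$. I would establish this isolation property by induction along the depletion construction of Section 3, verifying that passing from $H_{[v_0,\ldots,v_{i-2}]}$ to $H_{[v_0,\ldots,v_{i-1}]}$ removes precisely the edges still attached to $v_{i-1}$.

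With this independence in hand, I would take the expectation $\mathbb{E}$ and integrate the potentials in the order $\omega_{v_0},\omega_{v_1},\ldots,\omega_{v_{d(x,y)-1}}$. At stage $i$ only the single factor carrying index $i$ depends on $\omega_{v_i}$, so all later factors pull out of the integral as constants, and the conditional a priori bound of Lemma 4.1 yields $\mathbb{E}_{v_i}\!\left(|\langle v_i|(H_{[v_0,\ldots,v_{i-1}]}-z)^{-1}|v_i\rangle|^s\right)\le C_1(s,\rho)\lambda^{-s}$; this applies verbatim because each depleted operator is again a random Schr\"{o}dinger operator with the same i.i.d.\ potential, and the bound is uniform in the still-unintegrated potentials. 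For the final factor I would integrate $\omega_{v_{d(x,y)}}$ (and also $\omega_y$ when $v_{d(x,y)}\neq y$), using the diagonal bound $C_1(s,\rho)\lambda^{-s}$ for walks in $\mathcal{Y}(x,y)$ and the off-diagonal bound $C_2(s,\rho)\lambda^{-s}$ for walks in $\mathcal{X}(x,y)$, while the remaining potentials integrate to $1$. Since $C_1(s,\rho)<C_2(s,\rho)$, each surviving walk contributes at most $\left(C_2(s,\rho)\lambda^{-s}\right)^{d(x,y)+1}$, and summing over the $|\mathcal{W}'(x,y)|$ surviving walks gives the claimed bound.
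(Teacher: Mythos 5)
Your proposal is correct and follows essentially the same route as the paper's proof: the SAW representation, the subadditivity inequality $|\sum_k a_k|^s\le\sum_k|a_k|^s$, the sequential conditional integration of $\omega_{v_0},\omega_{v_1},\ldots$ using the block-diagonal independence of the depleted operators together with Lemma 4.1, and the final splitting into $\mathcal{Y}(x,y)$ and $\mathcal{X}(x,y)$ with $C_1<C_2$. The only difference is that you make explicit (via induction on the depletion) the independence of $\langle v_i|(H_{[v_0,\ldots,v_{i-1}]}-z)^{-1}|v_i\rangle$ from $\omega_{v_0},\ldots,\omega_{v_{i-1}}$, which the paper asserts without elaboration.
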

\begin{proof} 
Let $v_0 =x$.

$\mathbb{E}(|G (x, y ;z)|^{s}) =\mathbb{E} (| \langle v_0 | (H-z)^{-1} |y\rangle |^{s})$

$=\mathbb{E} \left( \left| \displaystyle\sum_{[v_0, \ldots , v_{d(x,y)}]} \displaystyle\prod_{i=0}^{d(x,y)-1} \langle v_{i} |(H_{[v_0, \ldots , v_{i-1}]} -z)^{-1} |v_{i} \rangle \langle v_{d(x,y)} | (H_{[v_0, \ldots, v_{d(x,y)-1}]}-z)^{-1} | y\rangle \right|^{s} \right)$

$\leq \mathbb{E} \left(  \displaystyle\sum_{[v_0, \ldots , v_{d(x,y)}]} \displaystyle\prod_{i=0}^{d(x,y)-1}\left| \langle v_{i} |(H_{[v_0, \ldots , v_{i-1}]} -z)^{-1} |v_{i} \rangle \langle v_{d(x,y)} | (H_{[v_0, \ldots, v_{d(x,y)-1}]}-z)^{-1} | y\rangle \right|^{s} \right)$

$=  \displaystyle\sum_{[v_0, \ldots , v_{d(x,y)}]}\mathbb{E} \left(  \displaystyle\prod_{i=0}^{d(x,y)-1}\left| \langle v_{i} |(H_{[v_0, \ldots , v_{i-1}]} -z)^{-1} |v_{i} \rangle \langle v_{d(x,y)} | (H_{[v_0, \ldots, v_{d(x,y)-1}]}-z)^{-1} | y\rangle \right|^{s} \right)$

$\leq |\mathcal{Y} (x,y) | (C_1 (s,\rho)\lambda^{-s})^{d(x,y)+1} +|\mathcal{X} (x,y) | (C_1 (s,\rho)\lambda^{-s})^{d(x,y)}C_2 (s,\rho)\lambda^{-s}$

$\leq  (|\mathcal{Y}(x,y) |+|\mathcal{X} (x,y)|) ( C_2 (s,\rho)\lambda^{-s})^{d(x,y)+1} =(|\mathcal{W}'(x,y)|) \left( \frac{C_2 (s,\rho)}{\lambda^{s}}\right)^{d(x,y)+1}$\\
where the third step used $|\sum x_{i}|^{s} \leq \sum |x_{i}|^{s}$ for $0<s<1$, the fourth step used the fact that sum of expectations is equal to expectation of sums and the sixth step used $C_1 (s,\rho) < C_2 (s,\rho)$ (they are different just by a factor $2^{s+1}$).  In the fifth step, we have\\

$\mathbb{E} \left(  \displaystyle\prod_{i=0}^{d(x,y)-1}\left| \langle v_{i} |(H_{[v_0, \ldots , v_{i-1}]} -z)^{-1} |v_{i} \rangle \langle v_{d(x,y)} | (H_{[v_0, \ldots, v_{d(x,y)-1}]}-z)^{-1} | y\rangle \right|^{s} \right)$\\

$=\mathbb{E}  (| \langle v_{0} |(H -z)^{-1} |v_{0} \rangle |^{s} |\langle v_{1} |(H_{[v_0]} -z)^{-1} |v_{1} \rangle |^{s} \times \cdots\times \\ \qquad \qquad | \langle v_{d(x,y)-1} |(H_{[v_0, \ldots , v_{d(x,y)-2}]} -z)^{-1} |v_{d(x,y)-1} \rangle |^{s} | \langle v_{d(x,y)} | (H_{[v_0, \ldots, v_{d(x,y)-1}]}-z)^{-1} | y\rangle |^{s} )$\\

$=\mathbb{E}_{v\setminus v_0} [ \mathbb{E}_{v_0} (| \langle v_{0} |(H -z)^{-1} |v_{0} \rangle |^{s}) \times |\langle v_{1} |(H_{[v_0]} -z)^{-1} |v_{1} \rangle |^{s} \times \cdots\times $

\qquad\quad$| \langle v_{d(x,y)-1} |(H_{[v_0, \ldots , v_{d(x,y)-2}]} -z)^{-1} |v_{d(x,y)-1} \rangle |^{s} | \langle v_{d(x,y)} | (H_{[v_0, \ldots, v_{d(x,y)-1}]}-z)^{-1} | y\rangle |^{s} ]$
\begin{flushright}(4.1)\end{flushright}
where $\mathbb{E}_{v\setminus v_0}$ is the expectation with respect to every random potential except the one at $v_0$ which is $\omega_{v_0}$ and $\mathbb{E}_{v_0}$ is the expectation with respect to $\omega_{v_0}$.

Since only $| \langle v_{0} |(H -z)^{-1} |v_{0} \rangle |^{s}$ depends on $\omega_{v_0}$, by Lemma 4.1, \\

(4.1) $\leq C_1 (s,\rho) \lambda^{-s} \mathbb{E}_{v\setminus v_0} (|\langle v_{1} |(H_{[v_0]} -z)^{-1} |v_{1} \rangle |^{s} \times \cdots\times$

\qquad\quad$| \langle v_{d(x,y)-1} |(H_{[v_0, \ldots , v_{d(x,y)-2}]} -z)^{-1} |v_{d(x,y)-1} \rangle |^{s} | \langle v_{d(x,y)} | (H_{[v_0, \ldots, v_{d(x,y)-1}]}-z)^{-1} | y\rangle |^{s})$\\

$=C_1 (s,\rho) \lambda^{-s} \mathbb{E}_{v\setminus \{v_0, v_1\} } [\mathbb{E}_{v_1}(|\langle v_{1} |(H_{[v_0, v_1]} -z)^{-1} |v_{1} \rangle |^{s}) \times \cdots\times$

\qquad\quad$| \langle v_{d(x,y)-1} |(H_{[v_0, \ldots , v_{d(x,y)-2}]} -z)^{-1} |v_{d(x,y)-1} \rangle |^{s} | \langle v_{d(x,y)} | (H_{[v_0, \ldots, v_{d(x,y)-1}]}-z)^{-1} | y\rangle |^{s}]$\\

$\leq  C_1^2 (s,\rho) \lambda^{-2s} \mathbb{E}_{v\setminus \{v_0, v_1, v_2\} } [\mathbb{E}_{v_2}(|\langle v_{2} |(H_{[v_0, v_1, v_2]} -z)^{-1} |v_{2} \rangle |^{s}) \times \cdots\times$

\qquad\quad$| \langle v_{d(x,y)-1} |(H_{[v_0, \ldots , v_{d(x,y)-2}]} -z)^{-1} |v_{d(x,y)-1} \rangle |^{s} | \langle v_{d(x,y)} | (H_{[v_0, \ldots, v_{d(x,y)-1}]}-z)^{-1} | y\rangle |^{s}]$

Here, we used the fact that Lemma 4.1 can also be applied to $H_{[v_0 , v_1 ,\ldots]}$.

Continuing the process, we obtain the above expression. Note that, in the last step, we have\\

$\mathbb{E}_{y}  (|\langle y|(H_{[v_0,\ldots,v_{d(x,y)-1}]} |y \rangle |^{s} ) \leq  C_1 (s,\rho) \lambda^{-s}$\quad if \quad$d(x,y) = y$.\\

$\mathbb{E}_{v_{d(x,y)},y}  (|\langle v_{d(x,y)}|(H_{[v_0,\ldots,v_{d(x,y)-1}]} |y \rangle |^{s} ) \leq C_2 (s, \rho) \lambda^{-s}$\quad if \quad$d(x,y) \not= y$.

\end{proof}

 In the lattice case, FMM states that if the fractional moment bounds decay sufficiently rapidly, we obtain dynamical localization. In the general graph case, the situation is more complicated as we will discuss in the next section. However, it is still true in the general graph case that as the number of self-avoiding walks increases, the system needs larger $\lambda$ to obtain dynamical localization. Even the type of graph changes, if $|\mathcal{W}' (x,y) |$ stays the same, the fractional moment bounds of the Green's function does not change. One-dimensional lattice and tree graphs such as Bethe lattice have $|\mathcal{Y} (x,y) |=1$ and $|\mathcal{X} (x,y)| =0$ since only one self-avoiding walk can arrive $y$ with $d(x,y)$ steps and the other self-avoiding walks with $d(x,y)$ steps do not have edges connected to $y$. This implies that one-dimensional Euclidean lattice and tree graphs have the same fractional moment bounds. However, as we will see in the next section, they still behave differently on dynamical localization.

\section{Dynamical Localization and Graph Properties}
In this section, we introduce FMM which connects the fractional moment bounds of the Green's function and dynamical localization in the large disorder regime. 

Recall that we defined two functions in the section 2.\\

$\mathscr{W} (d ) = \max \{  |\mathcal{W}' (x,y)| : d(x,y)=d \}$, thus $|\mathcal{W}' (x,y)| \leq \mathscr{W} (d(x,y))$.

$\mathscr{S} (d) = \displaystyle\max_{x} | S(x,d)|$.

\begin{theorem} Let $I \subset \mathbb{R}$, $0<s<1$ and $\epsilon \in (0,1/2)$. Then dynamical localization (1) holds for disorder $\lambda$ which satisfies the following condition. 
\begin{center}
 $ s\epsilon \ln \lambda > \epsilon\ln C_2 + \displaystyle\sup_{x,d} \left( \frac{\ln |S(x,d)|}{d}\right) $\quad and \quad $\displaystyle\sum_{d'=0}^{\infty} \mathscr{S} (d') \mathscr{W} (d')  \left(\frac{C_2}{\lambda^{s}}\right)^{(1-2\epsilon)d'}<\infty$
\end{center}

\medskip

\end{theorem}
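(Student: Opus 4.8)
The plan is to pass from the unitary dynamics to a time-independent object, the \emph{eigenfunction correlator}, and then feed in the fractional moment estimate of Theorem 4.2 together with the two hypotheses on $\lambda$. Since $H$ is bounded we may assume $I$ is a bounded interval (replace $I$ by $I\cap\sigma(H)$). First I would set
\[
Q_I(x,y) := \sup_{\|g\|_\infty\le 1}\left|\langle \delta_y, g(H)\chi_I(H)\delta_x\rangle\right|,
\]
the supremum running over all Borel functions $g$ with $\|g\|_\infty\le 1$. Because $g(u)=e^{-itu}$ is admissible for every $t\in\mathbb{R}$, we have $\sup_t|\langle\delta_y,e^{-itH}\chi_I(H)\delta_x\rangle|\le Q_I(x,y)$ pointwise, so it suffices to prove $\sum_{y\in S(x,d)}\mathbb{E}(Q_I(x,y))\le Ce^{-\mu d}$.

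The central analytic input is the standard FMM estimate relating the averaged correlator to a fractional moment of the Green's function: for $0<s<1$ there is a finite constant $C(s,\rho)$ with
\[
\mathbb{E}(Q_I(x,y)) \le C(s,\rho)\int_I \liminf_{\eta\downarrow 0}\mathbb{E}\left(|G(x,y;E+i\eta)|^{s}\right)\,dE ,
\]
established by passing through finite-volume restrictions $H_\Lambda$, applying the correlator bound there, and taking $\Lambda\uparrow V$ (see \cite{Sto,Su}). Since the bound of Theorem 4.2 is uniform in $z\in\mathbb{C}\setminus\mathbb{R}$, Fatou's lemma gives
\[
\mathbb{E}(Q_I(x,y)) \le C(s,\rho)\,|I|\,|\mathcal{W}'(x,y)|\left(\frac{C_2}{\lambda^{s}}\right)^{d(x,y)+1}.
\]
Summing over $y\in S(x,d)$ and using $|S(x,d)|\le\mathscr{S}(d)$ together with $|\mathcal{W}'(x,y)|\le\mathscr{W}(d)$ yields
\[
\sum_{y\in S(x,d)}\mathbb{E}(Q_I(x,y)) \le C(s,\rho)\,|I|\,\mathscr{S}(d)\,\mathscr{W}(d)\left(\frac{C_2}{\lambda^{s}}\right)^{d+1}.
\]

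It then remains to extract exponential decay, and here both hypotheses enter through the splitting $1=\epsilon+(1-2\epsilon)+\epsilon$ of the exponent: I would write
\[
\left(\frac{C_2}{\lambda^{s}}\right)^{d} = \left(\frac{C_2}{\lambda^{s}}\right)^{\epsilon d}\left(\frac{C_2}{\lambda^{s}}\right)^{(1-2\epsilon)d}\left(\frac{C_2}{\lambda^{s}}\right)^{\epsilon d}
\]
and distribute the three factors. The first condition rearranges to $\mathscr{S}(d)\,(C_2/\lambda^{s})^{\epsilon d}<1$ for every $d$ (it forces $\lambda^{s}>C_2$ and controls the shell growth), which absorbs the first factor against $\mathscr{S}(d)$. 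The second condition makes $\mathscr{S}(d)\mathscr{W}(d)(C_2/\lambda^{s})^{(1-2\epsilon)d}$ summable, hence bounded, so in particular $\mathscr{W}(d)(C_2/\lambda^{s})^{(1-2\epsilon)d}\le B<\infty$; this absorbs the middle factor against $\mathscr{W}(d)$. The leftover third factor $(C_2/\lambda^{s})^{\epsilon d}=e^{-\epsilon d\ln(\lambda^{s}/C_2)}$ then supplies the decay, giving (1) with rate $\mu=\epsilon\ln(\lambda^{s}/C_2)>0$ and a constant depending on $s,\rho,|I|,B$.

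The summation and the exponent bookkeeping of the last paragraph are routine. The main obstacle is the correlator bound of the second paragraph: on a general, possibly exponentially growing graph one must justify the finite-volume approximation, the existence and uniform control of the boundary values $G(x,y;E+i0)$, and the interchange of the supremum over $t$ with the infinite-volume limit, all without the polynomial volume growth that makes these steps automatic on $\mathbb{Z}^n$. This is exactly where the uniform degree bound $N(v)\le N$ and the a priori estimate of Lemma 4.1 do the work, and where one must be careful to keep every constant independent of the realization of the potential.
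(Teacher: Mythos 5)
Your proposal is correct in its overall logic but follows a genuinely different route from the paper. The paper never introduces the eigenfunction correlator bound you rely on; instead it writes $\sup_t|\langle\delta_y,e^{-itH}\chi_I(H)\delta_x\rangle|\le|\mu_{y,x}|(I)$, represents $|\mu_{y,x}|(I)$ via the Stone-type formula $\liminf_{\eta\to0^+}\frac{\eta}{\pi}\int_I\sum_{z\in\mathbb{G}}|G(y,z;E+i\eta)||G(z,x;E+i\eta)|\,dE$, and then uses Graf's second-moment bound (Proposition 5.2) plus Cauchy--Schwarz to reduce to fractional moments. That route forces a sum over an intermediate site $z$, which is why the paper must split $d(y,z)+d(z,x)$ by the triangle inequality into an $\epsilon d(y,x)$ piece (giving the decay) and a $(1-2\epsilon)$ piece (organized into shells around $y$ and $x$, producing exactly the series $\sum_{d'}\mathscr{S}(d')\mathscr{W}(d')(C_2/\lambda^s)^{(1-2\epsilon)d'}$ whose convergence is the second hypothesis). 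Your approach short-circuits all of this: with a direct bound $\mathbb{E}(Q_I(x,y))\le C(s,\rho)\int_I\liminf_\eta\mathbb{E}(|G(x,y;E+i\eta)|^s)\,dE$ there is no intermediate sum, the $\epsilon+(1-2\epsilon)+\epsilon$ splitting is applied to the single distance $d$, and the second hypothesis is used only to bound individual terms rather than to sum a series. If your correlator bound is granted, your argument is actually slightly stronger than the paper's: it would prove the theorem under the weaker hypothesis that $\mathscr{S}(d)\mathscr{W}(d)(C_2/\lambda^s)^{(1-2\epsilon)d}$ is merely bounded in $d$, and it yields the cleaner rate $\mu=\epsilon\ln(\lambda^s/C_2)$.

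The one point you should not gloss over is that the correlator bound in your second paragraph is not among the tools the paper establishes. It is not Lemma 4.1, not Theorem 4.2, and not Proposition 5.2; it is Aizenman's eigenfunction-correlator estimate, whose proof requires its own finite-volume spectral-averaging/decoupling argument and a strong-resolvent-convergence passage to infinite volume. It does hold in this setting (bounded $H$ thanks to the uniform degree bound, a.c. single-site distribution with bounded compactly supported density), so your proof is not wrong, but as written it outsources its central analytic step to a citation, whereas the paper deliberately builds everything from Proposition 5.2, which it states explicitly and attributes to \cite{Gr,Sto}. Also, your closing worry about exponential volume growth obstructing the finite-volume limit is misplaced: strong resolvent convergence of $H_\Lambda$ to $H$ needs only the uniform degree bound, not polynomial growth; the volume growth enters the theorem only through the two displayed hypotheses, exactly as in your bookkeeping.
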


The proof uses the argument by Graf \cite{Gr} that the fractional moment of Green's function $\mathbb{E}_{x} (|G (x,y;z)|^{s})$ can bound the second moment of Green's function $\mathbb{E}_{x} (|G (x,y;z)|^2)$:\\

\begin{proposition}For every $0< s <1$, there exists a constant $C'<\infty$ only depending on $s$ and $\rho$ such that
\begin{center}
$|\emph{\mbox{Im}} z| \mathbb{E}_{x} (|G(x,y;z)|^2) \leq C' \mathbb{E}_{x} (|G(x,y;z)|^{s})$
\end{center}
for all $z\in \mathbb{C} \setminus \mathbb{R}$ and $x,y \in \mathbb{G}$. $\mathbb{E}_{x}$ denotes averaging over $\omega_{x}$. \end{proposition}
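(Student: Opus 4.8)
The plan is to collapse the statement, which is an average over the single coupling $\omega_x$, into a one-dimensional integral inequality, using that $H$ depends on $\omega_x$ only through the rank-one term $\lambda\omega_x|x\rangle\langle x|$. Writing $H=H_0+\lambda\omega_x|x\rangle\langle x|$, where $H_0$ is $H$ with the potential at $x$ removed, and setting $G_0(u,v):=\langle u|(H_0-z)^{-1}|v\rangle$, the same resolvent identity used in Proposition 3.1 gives
\[ G(x,y;z)=\frac{G_0(x,y)}{1+\lambda\omega_x G_0(x,x)}=\frac{\alpha}{\omega_x-\beta}, \]
with $\alpha=G_0(x,y)/(\lambda G_0(x,x))$ and $\beta=-1/(\lambda G_0(x,x))$ both independent of $\omega_x$. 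Because $G_0(x,x)=\int(E-z)^{-1}\,d\mu^0_x(E)$ for the positive spectral measure $\mu^0_x$ of $|x\rangle$ under $H_0$, one has $\mathrm{Im}\,G_0(x,x)=\mathrm{Im}(z)\,\|(H_0-z)^{-1}|x\rangle\|^2$, so $\mathrm{Im}\,\beta\neq0$ and carries the same sign as $\mathrm{Im}\,z$.

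With this M\"obius form the two moments become explicit integrals against $\rho$. Writing $\eta:=|\mathrm{Im}\,\beta|$,
\[ \mathbb{E}_x(|G(x,y;z)|^2)=|\alpha|^2\int\frac{\rho(\omega_x)\,d\omega_x}{|\omega_x-\beta|^2},\qquad \mathbb{E}_x(|G(x,y;z)|^s)=|\alpha|^s\int\frac{\rho(\omega_x)\,d\omega_x}{|\omega_x-\beta|^s}. \]
Since $|\omega_x-\beta|^2=(\omega_x-\mathrm{Re}\,\beta)^2+\eta^2$ and $\rho\le\|\rho\|_{\infty}$, the second moment is tamed by the Poisson kernel,
\[ \int\frac{\rho(\omega_x)\,d\omega_x}{|\omega_x-\beta|^2}\le\|\rho\|_{\infty}\int\frac{d\omega_x}{(\omega_x-\mathrm{Re}\,\beta)^2+\eta^2}=\frac{\pi\|\rho\|_{\infty}}{\eta}, \]
so that $\mathbb{E}_x(|G|^2)\le\pi\|\rho\|_{\infty}|\alpha|^2/\eta$. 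For the $s$-th moment I would produce a matching lower bound by restricting the integral to the band $|\omega_x-\mathrm{Re}\,\beta|\le\eta$, on which $|\omega_x-\beta|\le\sqrt2\,\eta$ and hence $|G(x,y;z)|\ge|\alpha|/(\sqrt2\,\eta)$; this gives $\mathbb{E}_x(|G|^s)\ge(\sqrt2\,\eta)^{-s}|\alpha|^s\,\mu(\{|\omega_x-\mathrm{Re}\,\beta|\le\eta\})$. It then remains to absorb the residual powers of $|\alpha|$ and $\eta$ together with the prefactor $|\mathrm{Im}\,z|$, and the deterministic input here is the a priori resolvent bound $\|(H-z)^{-1}\|\le|\mathrm{Im}\,z|^{-1}$, valid for each fixed $\omega_x$, which yields $|\mathrm{Im}\,z|\,|G(x,y;z)|\le1$ pointwise and, evaluated at the peak $\omega_x=\mathrm{Re}\,\beta$, ties $|\mathrm{Im}\,z|\,|\alpha|$ to $\eta$.

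The hard part is to make the constant $C'$ genuinely uniform, that is, independent of $z$, of the frozen potentials $(\omega_u)_{u\neq x}$, and (in the present normalization) of $\lambda$, all of which are buried inside $\alpha$ and $\beta$. Two points are delicate. First, the lower bound above collapses when $\mathrm{Re}\,\beta$ lies outside or near the edge of the support of $\rho$, since then $\mu(\{|\omega_x-\mathrm{Re}\,\beta|\le\eta\})$ can vanish; in that regime, however, $|\omega_x-\beta|$ stays bounded away from $0$ on the support of $\rho$, so $|G|$ is uniformly bounded and the second moment is controlled with no $1/\eta$ blow-up, and the two regimes have to be reconciled. Second, one must check that $\rho\in L^\infty_0(\mathbb{R})$ alone, bounded and compactly supported but not bounded below, is enough. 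Establishing this uniformity is exactly the content of Graf's estimate \cite{Gr}, which I would invoke rather than re-derive the sharp constant; combining it with the rank-one reduction and the Poisson bound above then delivers the inequality with $C'=C'(s,\rho)$.
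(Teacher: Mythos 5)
The paper does not actually prove this proposition: it quotes it from Graf \cite{Gr} and Stolz \cite{Sto} (``$C'$ is a constant which appear[s] in Proposition 5.1 of \cite{Sto}''), so there is no in-paper argument to compare against. Your setup is the correct skeleton of the cited proof: the rank-one reduction $G(x,y;z)=\alpha/(\omega_x-\beta)$ with $\alpha,\beta$ independent of $\omega_x$, the observation that $\mathrm{Im}\,\beta$ is nonzero with the sign of $\mathrm{Im}\,z$, and the Poisson-kernel bound $\int\rho(\omega)|\omega-\beta|^{-2}d\omega\le\pi\|\rho\|_\infty/\eta$ are all right.

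The gap is in how you propose to close the comparison, and it is more than a uniformity bookkeeping issue. Your only deterministic link between $\alpha$, $\beta$ and $z$ is $|\mathrm{Im}\,z|\,|G|\le1$, i.e.\ $|\mathrm{Im}\,z|\,|\alpha|\le\eta$. Inserting this into the ratio of $|\mathrm{Im}\,z|\,|\alpha|^2\int\rho|\omega-\beta|^{-2}$ to $|\alpha|^s\int\rho|\omega-\beta|^{-s}$ leaves an uncontrolled factor of order $(\eta/|\mathrm{Im}\,z|)^{1-s}$, equivalently $|\mathrm{Im}\,z|^{s-1}$, and this genuinely diverges as $\mathrm{Im}\,z\to0$ at energies where $\mathrm{Im}\,G_0(x,x)$ has a nonzero limit (e.g.\ in the a.c.\ spectrum of $H_0$) --- exactly the regime the proposition must cover, since it is applied in the limit $\epsilon\to0+$ in the proof of Theorem 5.1. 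The ingredient that actually closes the argument in \cite{Gr,Sto} is the stronger identity $\mathrm{Im}\,G_0(x,x;z)=\mathrm{Im}(z)\sum_u|G_0(x,u;z)|^2$, which gives the exact relation $|\alpha|\,|\mathrm{Im}\,z|=\eta\,|G_0(x,y)|\,|G_0(x,x)|\,\bigl(\sum_u|G_0(x,u)|^2\bigr)^{-1}\le\eta/2$ and, more importantly, forces $|\alpha|$ to be small relative to $\eta/|\mathrm{Im}\,z|$ whenever the latter is large; without it the two moments cannot be compared uniformly. Your band lower bound for $\mathbb{E}_x(|G|^s)$ also degenerates when $\mathrm{Re}\,\beta\notin\mathrm{supp}\,\rho$ (as you note), and reconciling that regime again uses the same identity rather than the crude resolvent bound. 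Since you end by invoking \cite{Gr} for precisely this step --- the same step the paper itself outsources --- your proposal is consistent with the paper's treatment, but as a standalone proof it is incomplete at its central estimate, and the estimates you have assembled would not suffice to complete it.
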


$C'$ is a constant which appear in Proposition 5.1 of \cite{Sto}. The proof of Proposition also can be found in \cite{Gr, Sto}. \\

Now we prove Theorem 5.1 using Proposition 5.2. The proof of Theorem 5.1 uses a method which is well known
for the Anderson model on $\mathbb{Z}^{n}$ and can be applied directly for the Anderson model on graphs.

\begin{proof}

Firstly, we follow the proof in \cite{Gr, Sto} which introduces the complex Borel spectral measures $\mu_{y,x}$ of $H$ written as

\begin{center}
$\mu_{y,x} (B) = \langle \delta_{y}, \chi_{B} (H) \delta_{x}\rangle $
\end{center}
for Borel sets $B\subset\mathbb{R}$. Then, the total variation $|\mu_{y,x}|$ of $\mu_{y,x}$ is given by

\begin{center}
$|\mu_{y,x}| (B) = \displaystyle\sup_{g:\mathbb{R}\rightarrow\mathbb{C}, Borel, |g| \leq 1} \left| \int_{B} g (\lambda) d\mu_{y,x} (\lambda) \right| =\displaystyle\sup_{|g|\leq 1} |\langle \delta_{y} , g(H) \chi_{B} (H) \delta_{x}\rangle |$

\end{center}

This is a regular bounded Borel measure.

If we choose $g(H)=e^{-itH}$, we can bound the expectation in (1).

\begin{center}
$\displaystyle\sum_{y\in S(x,d)}\mathbb{E}\left(\displaystyle\sup_{t\in\mathbb{R}} |\langle \delta_{y}, e^{-itH} \chi_{I} (H) \delta_{x}\rangle| \right) \leq \displaystyle\sum_{y\in S(x,d)} \mathbb{E}(|\mu_{y,x}| (I)) $
\end{center}

Therefore, the exponential decay of $\displaystyle\sum_{y\in S(x,d)} \mathbb{E}(|\mu_{y,x}| (I))$ implies dynamical localization. In the same way as \cite{Sto}, we have\\

$\displaystyle\sum_{y\in S(x,d)} \mathbb{E} (|\mu_{y,x} |(I))$

$\leq \displaystyle\sum_{y\in S(x,d)}\mathbb{E} \left( \displaystyle\liminf_{\epsilon\rightarrow 0+} \frac{\epsilon}{\pi} \int_{I} \displaystyle\sum_{z\in\mathbb{G}} |\langle \delta_{y} , (H-E-i\epsilon)^{-1} \delta_{z}\rangle || \langle \delta_{z}, (H-E+i\epsilon)^{-1} \delta_{x}\rangle | dE \right)$

$\leq \frac{1}{\pi} \displaystyle\liminf_{\epsilon\rightarrow 0+}\displaystyle\sum_{y\in S(x,d)} \mathbb{E} \left(  \epsilon \int_{I} \displaystyle\sum_{z\in\mathbb{G}} |\langle \delta_{y} , (H-E-i\epsilon)^{-1} \delta_{z}\rangle || \langle \delta_{z}, (H-E+i\epsilon)^{-1} \delta_{x}\rangle | dE \right)$

$= \frac{1}{\pi} \displaystyle\liminf_{\epsilon\rightarrow 0+}\int_{I} \displaystyle\sum_{z\in\mathbb{G}}\displaystyle\sum_{y\in S(x,d)} \mathbb{E} \left(  \epsilon  |G(y,z;E+i\epsilon) || G(z,x;E+i\epsilon) |  \right)dE$

$\leq \displaystyle\liminf_{\epsilon\rightarrow 0+} \frac{1}{\pi} \int_{I} \displaystyle\sum_{z\in\mathbb{G}} \displaystyle\sum_{y\in S(x,d)}(\mathbb{E} (\epsilon |G(y,z ; E+i\epsilon) |^2))^{1/2} \cdot (\mathbb{E} (\epsilon |G(z,x ; E+i\epsilon)  |^2 ))^{1/2} dE$

The second step used Fatou's lemma, the third step used Fubini's theorem and the fourth step used Cauchy-Schwarz inequality. Now, we introduce Proposition 5.2 and Theorem 4.2.\\

$\displaystyle\sum_{y\in S(x,d)} \mathbb{E} (|\mu_{y,x} |(I))$

$\leq \displaystyle\lim_{\epsilon\rightarrow 0+} \frac{C'}{\pi} \int_{I} \displaystyle\sum_{z\in\mathbb{G}} \displaystyle\sum_{y\in S(x,d)}(\mathbb{E} (|G(y,z; E+i\epsilon )|^{s}))^{1/2} (\mathbb{E} (|G(z,x;E-i\epsilon)|^{s}))^{1/2} dE$

$\leq \frac{C'  |I|}{\pi} \displaystyle\sum_{z\in\mathbb{G}}\displaystyle\sum_{y\in S(x,d)} |\mathcal{W}' (y,z)|^{1/2} \left(\frac{C_2}{\lambda^{s}}\right)^{\frac{d(y,z)+1}{2}} |\mathcal{W}' (z,x) |^{1/2}  \left(\frac{C_2}{\lambda^{s}}\right)^{\frac{d(z,x)+1}{2}}$\\

Using triangle inequality,\\

$\frac{d(y,z) + d(z,x)}{2} = \epsilon (d(y,z) + d(z,x)) + \left(\frac{1}{2} -\epsilon \right)(d(y,z) + d(z,x))$

\qquad\qquad\quad $\geq \epsilon d(y,x)+ \left(\frac{1}{2} -\epsilon \right)(d(y,z) + d(z,x))$\\
where $0 <\epsilon < 1/2$. Then, by assuming $\frac{C_2}{\lambda^{s}} <1$ ($\lambda^{s} > C_2$), we have\\

$\left(\frac{C_2}{\lambda^{s}}\right)^{ (d(y,z)+d(z,x))/2} \leq \left(\frac{C_2}{\lambda^{s}}\right)^{\epsilon d(y,x) + (1/2 -\epsilon) (d(y,z) + d(z,x))}$ \\

Therefore,\\

 $\displaystyle\sum_{y\in S(x,d)} \mathbb{E} (|\mu_{y,x} |(I)) $

$\leq \frac{C'  |I|}{\pi}\displaystyle\sum_{z\in\mathbb{G}} \displaystyle\sum_{y\in S(x,d)} \left(\frac{C_2}{\lambda^{s}}\right)^{\epsilon d(y,x)}|\mathcal{W}' (y,z)|^{1/2}  \left(\frac{C_2}{\lambda^{s}}\right)^{(1/2 -\epsilon) d(y,z)+1/2}  |\mathcal{W}' (z,x) |^{1/2}  \left(\frac{C_2}{\lambda^{s}}\right)^{(1/2 -\epsilon) d(z,x)+1/2}$ 

$= \frac{C'  |I|}{\pi}\left(\frac{C_2}{\lambda^{s}}\right)^{\epsilon d}\displaystyle\sum_{z\in\mathbb{G}} \displaystyle\sum_{y\in S(x,d)} |\mathcal{W}' (y,z)|^{1/2}  \left(\frac{C_2}{\lambda^{s}}\right)^{(1/2 -\epsilon) d(y,z)+1/2} |\mathcal{W}' (z,x) |^{1/2}  \left(\frac{C_2}{\lambda^{s}}\right)^{(1/2 -\epsilon) d(z,x)+1/2}$ 

$\leq \frac{C'  |I|}{\pi}\left(\frac{C_2}{\lambda^{s}}\right)^{\epsilon d}\displaystyle\sum_{y\in S(x,d)} \left( \displaystyle\sum_{z\in\mathbb{G}} |\mathcal{W}' (y,z)|  \left(\frac{C_2}{\lambda^{s}}\right)^{(1-2\epsilon)d(y,z)+1} \right)^{1/2}  \left( \displaystyle\sum_{z\in\mathbb{G}}|\mathcal{W}' (z,x)|  \left(\frac{C_2}{\lambda^{s}}\right)^{(1-2\epsilon)d(z,x)+1}\right)^{1/2}$\\

by Cauchy-Schwarz inequality.\\

$\displaystyle\sum_{y\in S(x,d)} \mathbb{E} (|\mu_{y,x} |(I))$

$\leq \frac{C'  |I|}{\pi}\left(\frac{C_2}{\lambda^{s}}\right)^{\epsilon d}\displaystyle\sum_{y\in S(x,d)} \left( \displaystyle\sum_{z\in\mathbb{G}}\mathscr{W} (d(y,z))  \left(\frac{C_2}{\lambda^{s}}\right)^{(1-2\epsilon)d(y,z)+1} \right)^{1/2} \left( \displaystyle\sum_{z\in\mathbb{G}}\mathscr{W} (d(z,x))  \left(\frac{C_2}{\lambda^{s}}\right)^{(1-2\epsilon)d(z,x)+1}\right)^{1/2}$

 $= \frac{C'  |I|}{\pi}\left(\frac{C_2}{\lambda^{s}}\right)^{\epsilon d}\displaystyle\sum_{y\in S(x,d)}\left(\displaystyle\sum_{d'=0}^{\infty}\displaystyle\sum_{z\in S (y,d')}\mathscr{W} (d')  \left(\frac{C_2}{\lambda^{s}}\right)^{(1-2\epsilon)d'+1} \right)^{1/2}  \left( \displaystyle\sum_{d'=0}^{\infty}\displaystyle\sum_{z\in S (x,d')}\mathscr{W} (d')  \left(\frac{C_2}{\lambda^{s}}\right)^{(1-2\epsilon)d'+1}\right)^{1/2}$

 $= \frac{C'  |I|}{\pi}\left(\frac{C_2}{\lambda^{s}}\right)^{\epsilon d}|S(x,d)| \left(\displaystyle\sum_{d'=0}^{\infty} |S (y,d')| \mathscr{W} (d')  \left(\frac{C_2}{\lambda^{s}}\right)^{(1-2\epsilon)d'+1} \right)^{\frac{1}{2}}  \left( \displaystyle\sum_{d'=0}^{\infty} |S (x,d')| \mathscr{W} (d')  \left(\frac{C_2}{\lambda^{s}}\right)^{(1-2\epsilon)d'+1}\right)^{\frac{1}{2}} $

$ \leq  \frac{C'  |I|}{\pi}\left(\frac{C_2}{\lambda^{s}}\right)^{\epsilon d}|S(x,d)| \displaystyle\sum_{d'=0}^{\infty} \mathscr{S} (d') \mathscr{W} (d')  \left(\frac{C_2}{\lambda^{s}}\right)^{(1-2\epsilon)d'+1}  \leq Ce^{-\mu d}$\\
where $\mu= s\epsilon \ln \lambda-\epsilon\ln C_2 -\displaystyle\sup_{x,d} \left( \frac{\ln |S(x,d)|}{d}\right) $ and $C=\frac{C'C_2 |I|}{\pi\lambda^{s}}\displaystyle\sum_{d'=0}^{\infty} \mathscr{S} (d') \mathscr{W} (d')  \left(\frac{C_2}{\lambda^{s}}\right)^{(1-2\epsilon)d'}$.

Therefore we have dynamical localization (1) if \\

$\mu >0$, thus, $ s\epsilon \ln \lambda > \epsilon\ln C_2 +\displaystyle\sup_{x,d} \left( \frac{\ln |S(x,d)|}{d}\right) $ and

$\displaystyle\sum_{d'=0}^{\infty} \mathscr{S} (d') \mathscr{W} (d')  \left(\frac{C_2}{\lambda^{s}}\right)^{(1-2\epsilon)d'} <\infty$. 
\end{proof}

This indicates that although the trees and one-dimensional lattice have the
same fractional moment bounds, trees need larger $\lambda$ to obtain dynamical
localization because of the factors $|S(x,d)|$ and $\mathscr{S}(d')$.
\section{Conclusions \& Discussion}

One of the most important open problems in random operator theory is to understand the transition between the  localized regime and the extended states regime. There is an attempt to understand the transition using the \emph{level statistics conjecture} and \emph{random matrix theory} (RMT). This method allows physicists to distinguish the two regimes numerically.

We use the statistical distribution of the eigenvalues of finite volume restrictions of the Anderson model to distinguish two regimes. It is expected that the localized regime and the extended states regime are corresponding to Poisson statistics and Gaussian orthogonal ensemble (GOE) statistics of the eigenvalues respectively.

Some studies proved mathematically that the finite volume eigenvalues show Poisson distribution in the localized regime. Molchanov first proved the Poisson statistics for eigenvalues for one-dimensional continuum random Schr\"{o}dinger operator \cite{Mo}. Subsequently, Minami \cite{Mi} proved Poisson statistics for eigenvalues of the Anderson model. He assumed the exponential decay of the fractional moment of the Green's function holds for complex energies near $E$. Then, he proved the random sequence of rescaled eigenvalues of finite volume converges weakly to the stationary Poisson point process as the finite volume gets large and there is no correlation between eigenvalues near the energy $E$ where Anderson localization is expected.

However, it is still an open problem whether the extended states regime can be characterized by GOE statistics. In RMT, GOE statistics can be obtained for \emph{Wigner random matrices}. All elements in Wigner matrices are random, while only the diagonal matrix elements are random in the Anderson model. Therefore, it is suggested that \emph{random band matrices} which increase amount of off-diagonal random entries can be an useful tool to understand the transition between two regimes \cite{Sto}.

Some studies in the physics community make use of such method to test the transition. Although the study of the Anderson model started in the condensed matter physics, recent studies show this model has an application in many areas such as quantum computing and quantum biology because of the possibility of building the systems using condensed matter physics. 

In quantum computing, Giraud et al. \cite{Gi} studied the model of a circular graph with on-site disorder where each vertex is linked with its two nearest-neighbours and also they added shortcut edges between random pairs of vertices (Figure 3). Therefore, this is the one-dimensional Anderson model with extra off-diagonal random elements.

\begin{figure}[htbp]
 \begin{center}
  \includegraphics[width=35mm]{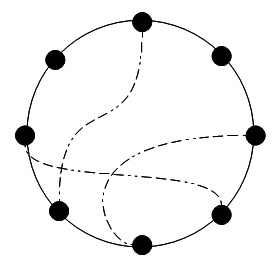}
 \end{center}
 \caption{The dashed lines describe random shortcut edges which represent the off-diagonal disorder \cite{Gi}.}
 \label{fig:two}
\end{figure}

They studied level spacing statistics for Hamiltonian and obtained GOE distribution for small on-site disorder $\lambda$ and Poisson distribution as they made on-site disorder $\lambda$ larger. According to level statistics conjecture, it is expected that GOE distribution represents the extended states, while Poisson distribution represents the localized states.

It might be possible to make a relation between this transition and Theorem 5.1. Adding
off-diagonal random entries (shortcut edges between random pairs of vertices) increases
the number of self-avoiding walks $\mathscr{W}(d)$ and firstly we have the extended states with small on-site disorder. As we increase on-site disorder $\lambda$, it overcomes the number of self-avoiding walks and we obtain the localized states.

A small number of self-avoiding walks may correspond to the localized states by
Theorem 5.1 and also Poisson statistics since distant regions are uncorrelated and the
system creates almost independent eigenvalues which do not have energy repulsion. On
the other hand, a larger number of self-avoiding walks may correspond to the delocalized
states by Theorem 5.1 if $\lambda$ is not large enough to overcome $\mathscr{W} (d)$ and also GOE statistics since distant
regions are correlated, which creates energy level repulsion \cite{Com}. When $\lambda$ overcomes $\mathscr{W}(d)$,
we may have the transition from the extended states to the localized states.

In our work, the connection between distant regions is reflected in the size of $\mathscr{W} (d)$. When $\mathscr{W} (d)$ is large, it is harder to obtain dynamical localization. Also, different from diagonal disorder $\lambda$, off-diagonal disorder
does not always work for localization, but it works against localization when it increases
the number of self-avoiding walks $\mathscr{W}(d)$.

\ack{}
This paper is based on part of the author's master thesis. I am deeply grateful to my advisors Richard Froese and P. C. E. Stamp for their guidance and support. It is a pleasure to thank Joel Feldman for giving me detailed comments and suggestions.

\section*{References}

\end{document}